\newtheorem{theorem}{Theorem}
\def\endprf{\hfill  {\vrule height6pt width6pt depth0pt}\medskip}
\newenvironment{proof}{\noindent {\bf Proof} }{\endprf\par}
\begin{document}

% Use the \preprint command to place your local institutional report
% number in the upper righthand corner of the title page in preprint mode.
% Multiple \preprint commands are allowed.
% Use the 'preprintnumbers' class option to override journal defaults
% to display numbers if necessary
%\preprint{}

%Title of paper
\title{Mass generation and supersymmetry}

% repeat the \author .. \affiliation  etc. as needed
% \email, \thanks, \homepage, \altaffiliation all apply to the current
% author. Explanatory text should go in the []'s, actual e-mail
% address or url should go in the {}'s for \email and \homepage.
% Please use the appropriate macro foreach each type of information

% \affiliation command applies to all authors since the last
% \affiliation command. The \affiliation command should follow the
% other information
% \affiliation can be followed by \email, \homepage, \thanks as well.
\author{Marco Frasca}
\email[]{marcofrasca@mclink.it}
%\homepage[]{Your web page}
%\thanks{}
%\altaffiliation{}
\affiliation{Via Erasmo Gattamelata, 3 \\ 00176 Roma (Italy)}

%Collaboration name if desired (requires use of superscriptaddress
%option in \documentclass). \noaffiliation is required (may also be
%used with the \author command).
%\collaboration can be followed by \email, \homepage, \thanks as well.
%\collaboration{}
%\noaffiliation

\date{\today}

\begin{abstract}
Using a recent understanding of mass generation for Yang-Mills theory and a quartic massless scalar field theory mapping each other, we show that when such a scalar field theory is coupled to a gauge field and Dirac spinors, all fields become massive at a classical level with all the properties of supersymmetry fulfilled, when the self-interaction of the scalar field is taken infinitely large. Assuming that the mechanism for mass generation must be the same in QCD as in the Standard Model, this implies that Higgs particle must be supersymmetric.
\end{abstract}

% insert suggested PACS numbers in braces on next line
%\pacs{11.15.-q, 11.15.Me}
% insert suggested keywords - APS authors don't need to do this
%\keywords{}

%\maketitle must follow title, authors, abstract, \pacs, and \keywords
\maketitle

% body of paper here - Use proper section commands
% References should be done using the \cite, \ref, and \label commands

\section{Introduction}

Standard model requires that all the particles entering into the theory must be massless. So, a mechanism must exist that generates these masses and breaks in some way the symmetry of the model. A model that grants this is the Englert-Brout-Higgs-Guralnik-Hagen-Kibble mechanism. The essential idea behind this mechanism is to make a field interact with a scalar field having minims shifted from zero \cite{eng,hig1,hig2,gur}. This idea has been so successful to being an acquired part of theoretical physics with a wealth of applications. The best of these has been its insertion into the standard model making this a consistent theory to explain all known particle phenomenology\cite{sm}.

A lot of activities has currently done to understand Yang-Mills theory in the low energy limit. Most of them are performed solving the theory on the lattice to understand both the spectrum and the behavior of the two-point functions\cite{cuc,ste,tep,mor} but also using functional techniques trying to unveil the solutions of Dyson-Schwinger equations \cite{avs,an,bou2,bou3}.

In this context we have solved classical Yang-Mills equations of motion showing that, in a limit of the gauge coupling going to infinity, the theory admits massive solutions already classically. The theory can then be extended to a quantum analysis maintaining such a property\cite{fra2,fra3}. These developments become possible after we showed that, for a quartic scalar field theory, a strongly coupled quantum field theory can be devised \cite{fra1}.

Our aim in this paper is to prove a theorem for massless quantum electrodynamics with a quartic scalar field. We prove that, in a strong coupling limit for the scalar field, all the fields of the model acquire the same mass and the coupling are fixed consistently with a supersymmetric model. We maintain for the present the analysis to classical solutions but this model is amenable to a quantum treatment. A recent analysis also shows that the coupling of the scalar field theory decreases at lower momenta \cite{fra4,sus1,sus2,pod} making all the scenario consistent to be extended to the standard model.

The paper is structured as follows. In sec.\ref{sec1} we describe the mechanism of generation of the mass that applies also to Yang-Mills theory. In sec.\ref{sec2} we prove the theorem showing how massless scalar field theory with a quartic scalar field, in the limit of an infinitely large self-interaction of the scalar field, gets all the fields massive and the couplings consistent with a supersymmetric theory. In sec.\ref{sec3} we make some considerations about this mechanism and the standard model. Finally, in sec.\ref{sec4} we present some conclusions.

\section{Mechanism of mass generation\label{sec1}}

In ref.\cite{fra1} we showed how a mass gap can arise for a scalar field with a quartic self-interaction. This is due to the coupling going to infinity that makes self-interaction strong. We would like to make this paper self-contained, so we consider the case of a complex scalar field with Lagrangian
\begin{equation}
   L=\partial_\mu\phi^\dagger\partial^\mu\phi-\frac{\lambda}{2}|\phi|^4. 
\end{equation}
This field is massless but, already at a classical level, displays a massive excitation. This can be easily seen from the motion equation given by
\begin{equation}
   \partial^2\phi+\lambda|\phi|^2\phi=0.
\end{equation}
This equation admits the simple solution
\begin{equation}
\label{eq:exsol}
   \phi(x)=\mu\left(\frac{2}{\lambda}\right)^\frac{1}{4}e^{i\alpha}{\rm sn}(p\cdot x+\theta,i)
\end{equation}
being $\alpha$ and $\theta$ two arbitrary phases, $\mu$ an integration constant and sn a Jacobi elliptic function. This holds provided the following dispersion relation holds
\begin{equation}
\label{eq:disp0}
   p^2=\mu^2\sqrt{\frac{\lambda}{2}}
\end{equation}
that is, this solution describes a free massive solution. That this is a free particle can be also seen from the fact that this solution admits an expansion in plane waves. So, we see here that increasing the strength of the non-linearity in the equation just conspires in producing a mass.

% Added on 26 December 2010
It is important to note that these solution have not infinite energy. As one can see by direct substitution that these represent a set of exact solutions with finite energy and this energy is the one of a free massive particle. Nonlinearity conspires to produce a mass to the excitations of the field. On the other side, if one looks at the Hamiltonian of the theory given by
\begin{equation}
   H = \int d^3x \left[\frac{1}{2}|\partial_t\phi|^2+\frac{1}{2}|\nabla\phi|^2+\frac{\lambda}{4}|\phi|^4\right]
\end{equation}
a direct substitution of the solution (\ref{eq:exsol}) gives an infinity exactly as one could obtain in a theory of free non-interacting particles described by ordinary plane-waves. To evade this problem, one generally considers a box with a side $L$ and takes the limit $L\rightarrow\infty$ at the end of computation. In our case, we need to observe that the Jacobi function ${\rm sn}$ has a real period $4nK(i)$ being $n$ an integer and $K(i)=\int_0^{\pi/2}d\theta/\sqrt{1+\sin^2\theta}=1.3111\ldots$ an elliptic integral. This means that the solution in the box is the same as (\ref{eq:exsol}) provided we quantize the momentum as
\begin{equation}
   p_k=\frac{4n_kK(i)}{L}
\end{equation}
with $k=x,y,z$. This gives without difficulty
\begin{equation}
\label{eq:H}
   H=\frac{\mu^2}{2}\sqrt{\frac{2}{\lambda}}V\left(p_0^2+\frac{1}{3}{\bf p}^2\right).
\end{equation}
With respect to eq.(\ref{eq:exsol}) that has a finite energy given by eq.(\ref{eq:disp0}) we have here a couple of inconsistencies. The first one arises by taking the limit $\lambda\rightarrow 0$ that does not recover the free massless limit and the other one arises when the volume $V$ is taken to be infinitely large as we need. Both the inconsistencies are removed by observing that $\lambda$ is a free parameter of the theory and can be arbitrarily rescaled through other arbitrary parameters. So, if we normalize eq.(\ref{eq:H}) to eq.(\ref{eq:disp0}), one can rescale
\begin{equation}
 \lambda = \bar\lambda(p)V^2\mu^6,
\end{equation}
with $\mu$ introduced by dimensional reasons, showing a running coupling. This result says that, being $\lambda$ an arbitrary parameter of the theory one can always rescale it to prove that our solutions have finite energy as eq.(\ref{eq:disp0}) is rightly stating. This procedure is quite similar to the case of a wave equation to force finite energy in an infinite volume. Taking this limit directly on the solutions has no meaning in this technique but works for the Hamiltonian.
% End

This mechanism appears at work in Yang-Mills theory as we proved recently\cite{fra2,fra3}. Indeed, on a basis of a mapping theorem whose proof has been completed in \cite{fra3}, it is possible to show that a solution of the classical SU(N) Yang-Mills equations of motion, in the limit of the gauge coupling going to infinity, can be written as
\begin{equation}
   A_\mu^a=\eta_\mu^a\phi(x)+O\left(\frac{1}{\sqrt{N}g}\right)
\end{equation}
being $\eta_\mu^a$ a chosen set of constants (Smilga's choice\cite{smi}) and $g$ the gauge coupling. So, these gauge connections describe a free massive particle provided we do the substitution $\lambda\rightarrow Ng^2$ where we can recognize the 't~Hooft coupling of strong interactions.

The question we are interested here is if such a mechanism to generate mass can be identically used for other field theories and specially for the standard model. One should expect that, in the limit of the coupling going to infinity, also the fields interacting with such a scalar field will acquire a mass. Such a scalar field can be considered a Higgs field yet but we do not assume it has a mass term. We will see that this is realized if the essential requirements of supersymmetry are satisfied.

\section{Massless quantum electrodynamics\label{sec2}}

The simplest case where a model for mass generation can be tested is the U(1) case of quantum electrodynamics. So, we write down the following Lagrangian
\begin{equation}
   L=-\frac{1}{4}F^2+\bar\psi\left(i\slashed{\partial}-ie\slashed{A}-g\phi\right)\psi
   +\left|\partial\phi-ieA\phi\right|^2-\frac{\lambda}{2}|\phi|^4
\end{equation}
and we take the limit $\lambda\rightarrow\infty$ . Here, $e$ is the gauge coupling, $g$ is the strength of the Yukawa coupling and $\phi=\phi_1+i\phi_2$ a complex scalar field. We note that in this model all the particles are massless. We expect that they get their masses through the interaction with the scalar field that by itself becomes massive by a strong self-interaction. We prove the following theorem
\begin{theorem}[Mass generation]
\label{teo1}
In the limit $\lambda\rightarrow\infty$, quantum electrodynamics interacting with a massless scalar field gets all fields massive with equal masses and the couplings proper to a supersymmetric model.
\end{theorem}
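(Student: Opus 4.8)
\medskip\noindent\emph{Proof strategy.} The plan is to work entirely with the classical Euler--Lagrange equations and to exhibit, in the limit $\lambda\to\infty$, a solution of the coupled system in which $\phi$, $A_\mu$ and $\psi$ are all free massive excitations of the Jacobi type of Sec.~\ref{sec1} carrying one and the same momentum $p$. Varying the Lagrangian gives
\begin{equation}
\partial_\mu F^{\mu\nu}=-ie\,[\phi^\dagger D^\nu\phi-(D^\nu\phi)^\dagger\phi]-e\bar\psi\gamma^\nu\psi,\qquad
D^2\phi+\lambda|\phi|^2\phi+g\,\bar\psi\psi=0,\qquad
(i\slashed{\partial}-ie\slashed{A}-g\phi)\psi=0,
\end{equation}
with $D_\mu=\partial_\mu-ieA_\mu$. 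As $\lambda\to\infty$ the quartic term dominates the scalar equation, which reduces to $\partial^2\phi+\lambda|\phi|^2\phi=0$; its solution is (\ref{eq:exsol}) with a fixed phase, amplitude $\phi_0=\mu(2/\lambda)^{1/4}$ and dispersion relation $p^2=\lambda\phi_0^2/2=\mu^2\sqrt{\lambda/2}$. Write $m^2\equiv p^2$: this is the candidate common mass, and the point of the remaining equations is to show that $A_\mu$ and $\psi$ can inherit it, and only for a definite relation among $\lambda$, $e$, $g$.

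First I would put this $\phi$ into the Maxwell equation, treating $A_\mu$ and $\psi$ as small excitations over the scalar background (consistent because the self-coupling is taken arbitrarily large, so the scalar's own dynamics is unperturbed). A constant phase makes the scalar Noether current $[\phi^\dagger D^\nu\phi-(D^\nu\phi)^\dagger\phi]$ collapse to $-2ieA^\nu|\phi|^2$, so in Lorenz gauge, and dropping the fermion current, the photon obeys $\partial^2 A^\nu=-2e^2|\phi|^2 A^\nu$. Taking $A^\nu=\epsilon^\nu\,{\rm sn}(p\cdot x,i)$ with $p\cdot\epsilon=0$ and using ${\rm sn}''=-2\,{\rm sn}^3$ at modulus $i$, this is solved exactly when $p^2=e^2\phi_0^2$. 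Compatibility with the scalar dispersion relation then forces $\lambda=2e^2$ --- the quartic coupling fixed to the square of the gauge coupling, as in a $D$-term potential --- and the photon has become a genuinely massive vector (three transverse polarizations) of mass $m$. Note this requires $e\sim\sqrt{\lambda}$, in line with the 't~Hooft coupling discussed in Sec.~\ref{sec1}.

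Next I would feed $\phi$ (and $A_\mu$) into the Dirac equation, which becomes $[\,i\slashed{\partial}-M\,{\rm sn}(p\cdot x,i)\,]\psi=0$ with $M$ a constant coefficient built from $g\phi_0$ and $e\epsilon_\mu$. Writing $\psi=[\,i\slashed{\partial}+M\,{\rm sn}\,]\chi$ reduces this to a second-order equation in which $\phi$ now enters quadratically, schematically $-\partial^2\chi+iM(\slashed{\partial}\,{\rm sn})\chi-M^2{\rm sn}^2\chi=0$; the ${\rm sn}^2$ term is an effective mass of the same form as the photon's, and matching it to $m^2$ by means of the Jacobi identities (${\rm dn}^2-{\rm sn}^2=1$, ${\rm sn}'={\rm cn}\,{\rm dn}$, \ldots) fixes $g$ proportional to $e$, with the $\sqrt{2}$ ratio characteristic of a supersymmetric gaugino--matter Yukawa coupling. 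Collecting the three outcomes --- $\lambda=2e^2$, $g=\sqrt{2}\,e$, and a single common mass $m=e\phi_0$ carried by scalar, photon and spinor alike --- one recognizes the coupling pattern of a supersymmetric gauge theory ($D$-term quartic plus gaugino Yukawa), which proves the claim; the finiteness of the energy of these solutions is dealt with by the coupling-rescaling argument of Sec.~\ref{sec1}.

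The step I expect to be the main obstacle is the Dirac one. Because $\phi$ enters the Yukawa term linearly, the fermion sees a mass of oscillating sign rather than a constant one, and one must verify that the squared equation still admits a bounded, plane-wave--decomposable solution and extract the correct coefficient relating $m_\psi$ to $g\phi_0$ (an averaging of ${\rm sn}^2$ over a period, or a Lam\'e-type analysis of the reduced equation, is likely to be needed). Two further technical points are the non-commutativity of the matrix-valued part of $M$ (the gauge piece $e\slashed{\epsilon}$) with the Dirac matrices, which obstructs the naive squaring, and the a posteriori check that the fermion current together with the $O(e^2A^2)$ and $O(g\bar\psi\psi)$ terms dropped from the Maxwell and scalar equations are indeed subleading as $\lambda\to\infty$.
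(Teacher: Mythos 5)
There is a genuine gap, and it sits exactly where the theorem gets its content: the determination of the couplings. Your Dirac step is not carried out. You assert that squaring $[\,i\slashed{\partial}-M\,\mathrm{sn}\,]\psi=0$ and ``matching'' the $\mathrm{sn}^2$ term will fix $g=\sqrt{2}\,e$, but you never perform the matching, and you yourself list unresolved obstacles (the non-commuting $e\slashed{\epsilon}$ piece, the oscillating-sign mass, the need for a Lam\'e or averaging analysis). The paper does something quite different and much more economical: it organizes the problem as a $1/\sqrt{\lambda}$ expansion with the rescaling $x\to\sqrt{\lambda}\,x$, takes $A_\mu^{(0)}=0$ at leading order, and then solves the Dirac equation \emph{exactly} in the $\mathrm{sn}$ background, $\psi_0=[\mathrm{dn}(\xi,i)-\mathrm{cn}(\xi,i)]^{e^{i\alpha}g\sqrt{2/\lambda}}\,u$ with $\slashed{n}u=mu$, so the fermion inherits the scalar mass from the dispersion relation (\ref{eq:disp}) and the Yukawa coupling is fixed to $g=\sqrt{\lambda/2}$ by single-valuedness of that solution. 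Nothing in your sketch produces a definite value of $g$, and the value you claim ($g^2=2e^2=\lambda$ in your normalization) is not the one the paper derives.

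The photon step has the same problem in a subtler form. Your relation $\lambda=2e^2$ follows only from the extra, unjustified assumption that the photon fluctuation has exactly the background's profile, $A^\nu=\epsilon^\nu\,\mathrm{sn}(p\cdot x,i)$ with the same argument $p\cdot x$: the linearized equation $\partial^2A^\nu+\mathrm{const}\cdot\mathrm{sn}^2(p\cdot x,i)\,A^\nu=0$ is of Lam\'e type and admits bounded band solutions for generic values of the coefficient, so requiring the $\mathrm{sn}$ mode is a choice, not a consequence; moreover your coefficient $2e^2|\phi|^2$ differs by a factor of two from the equation of motion as written in the paper. The paper instead keeps $A_\mu^{(0)}=0$ and generates the photon mass at next-to-leading order, where it demands that the homogeneous Lam\'e equation $y''+a\,\mathrm{sn}^2(b\,\xi,i)y=0$ support the massive mode $\mathrm{cn}(\xi,i)\,\mathrm{dn}(\xi,i)$, i.e. $b^2=a/6$, which fixes $e^2=3\lambda$. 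So even granting your ansatz, your final coupling pattern ($\lambda=2e^2$, $g=\sqrt{2}\,e$) is incompatible with the pattern the theorem actually asserts ($e^2=3\lambda$, $g=\sqrt{\lambda/2}$), and your argument supplies no criterion selecting your photon mode over any other; without completing the fermion calculation and justifying the choice of photon mode (or reproducing the paper's NLO stability argument), the claim that the couplings come out ``supersymmetric'' is not proved.
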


\begin{proof}
We can write down the equations of motion for the theory obtaining
\begin{eqnarray}
   \partial^2A_\mu-\partial_\mu(\partial\cdot A)&=&
   e\bar\psi\gamma_\mu\psi+ie\left(\phi^\dagger\partial_\mu\phi-\partial_\mu\phi^\dagger\phi\right)-e^2|\phi|^2A_\mu \\ \nonumber
   \left(i\slashed{\partial}-e\slashed{A}-g\phi\right)\psi&=&0 \\ \nonumber
   \partial^2\phi+\lambda|\phi|^2\phi&=&2ie\partial^\mu\phi A_\mu+e^2A^2\phi+ie\phi\partial\cdot A
\end{eqnarray}
The choice of the gauge is not so relevant as we will see below, so we choose Lorenz gauge. Here we assume, and will prove, that $g,e=o(\lambda)$ and so we can assume an ordering, in the limit $\lambda\rightarrow\infty$, so to have to solve at the leading order the following set of equations
\begin{eqnarray}
\label{eq:set}
   \partial^2\phi_0+\lambda|\phi_0|^2\phi_0&=&2ie\partial^\mu\phi_0 A_\mu^{(0)}+e^2\phi_0[A^{(0)}]^2 \\ \nonumber
   (i\slashed{\partial}-e\slashed{A}^{(0)}-g\phi_0)\psi_0&=&0 \\ \nonumber
   \partial^2A_\mu^{(0)}+e^2|\phi_0(\xi)|^2A_\mu^{(0)}&=&
   ie\left(\phi^\dagger_0\partial_\mu\phi_0-\partial_\mu\phi^\dagger_0\phi_0\right).
\end{eqnarray}
We will compute the next-to-leading order equation for the gauge field below as it will be used for the proof of the theorem.

In order to check the correctness of the eqs.(\ref{eq:set}), we take the following series
\begin{eqnarray}
   \phi&=&\phi_0+\frac{1}{\sqrt{\lambda}}\phi_1+O\left(\frac{1}{\lambda}\right) \\ \nonumber
   \psi&=&\psi_0+\frac{1}{\sqrt{\lambda}}\psi_1+O\left(\frac{1}{\lambda}\right) \\ \nonumber
   A_\mu&=&A_\mu^{(0)}+\frac{1}{\sqrt{\lambda}}A_\mu^{(1)}+O\left(\frac{1}{\lambda}\right). 
\end{eqnarray}
Then, we operate the following rescaling on the space-time variables, for the scalar field equation
\begin{equation}
   x_\mu\rightarrow \sqrt{\lambda}x_\mu
\end{equation}
so, one has
\begin{equation}
   \partial^2\phi+|\phi|^2\phi=2i\frac{e}{\sqrt{\lambda}}\partial^\mu\phi A_\mu+\frac{e^2}{\lambda}A^2\phi
\end{equation}
and assuming that $e\sim\sqrt{\lambda}$ to be checked {\sl a posteriori}, gives the leading order equation
\begin{equation}
   \partial^2\phi_0+|\phi_0|^2\phi_0=2i\frac{e}{\sqrt{\lambda}}\partial^\mu\phi_0 A_\mu^{(0)}+\frac{e^2}{\lambda}\phi_0[A^{(0)}]^2.
\end{equation}
We operate in the same way on the equation of the gauge field that, after rescaling, becomes
\begin{equation}
\label{eq:res}
  \partial^2A_\mu+\frac{e^2}{\lambda}|\phi|^2A_\mu=
   \frac{e}{\lambda}\bar\psi\gamma_\mu\psi+i\frac{e}{\sqrt{\lambda}}\left(\phi^\dagger\partial_\mu\phi-\partial_\mu\phi^\dagger\phi\right)
\end{equation}
that will give at the leading order, always assuming $e\sim\sqrt{\lambda}$,
\begin{equation}
  \partial^2A_\mu^{(0)}+\frac{e^2}{\lambda}|\phi_0|^2A_\mu^{(0)}=
   i\frac{e}{\sqrt{\lambda}}\left(\phi^\dagger_0\partial_\mu\phi_0-\partial_\mu\phi^\dagger_0\phi_0\right).
\end{equation}
Finally, for the Dirac field one has, after rescaling,
\begin{equation}
  \left(i\slashed{\partial}-\frac{e}{\sqrt{\lambda}}\slashed{A}-\frac{g}{\sqrt{\lambda}}\phi\right)\psi=0 
\end{equation}
giving at the leading order, assuming also $g\sim\sqrt{\lambda}$ to be checked {\sl a posteriori},
\begin{equation}
  \left(i\slashed{\partial}-\frac{e}{\sqrt{\lambda}}\slashed{A}^{(0)}-\frac{g}{\sqrt{\lambda}}\phi_0\right)\psi_0=0. 
\end{equation}
Next-to-leading order will be, starting from the rescaled equation (\ref{eq:res}),
\begin{eqnarray}
   \partial^2A_\mu^{(1)}+e^2|\phi_0|^2A_\mu^{(1)}&=&
   e\bar\psi_0\gamma_\mu\psi_0+ie\left(\phi_1^\dagger\partial_\mu\phi_0-\partial_\mu\phi_1^\dagger\phi_0\right)\\ \nonumber
   &+&
   ie\left(\phi_0^\dagger\partial_\mu\phi_1-\partial_\mu\phi_0^\dagger\phi_1\right)
   -e^2\phi_0^\dagger\phi_1A_\mu^{(0)}-e^2\phi_1^\dagger\phi_0A_\mu^{(0)}.
\end{eqnarray}
So, we see that our ordering exists and we can look for a solution.

The set of equations (\ref{eq:set}) can be easily solved if we take $A_\mu^{(0)}=0$. So, one has immediately
\begin{equation}
   \phi_0=\mu\left(\frac{2}{\lambda}\right)^\frac{1}{4}e^{i\alpha}{\rm sn}(n\cdot x+\theta,i)
\end{equation}
being $\rm sn$ a Jacobi elliptical function, $\alpha$ and $\theta$ two constant phases and $\mu$ an integration constant. This holds provided
\begin{equation}
\label{eq:disp}
   n^2=\mu^2\sqrt{\frac{\lambda}{2}}
\end{equation}
that is, the scalar field, at the leading order, describes a free massive particle. It is easy to verify that $\phi_0^\dagger\partial_\mu\phi_0-\partial_\mu\phi^\dagger_0\phi_0=0$ and this solution is perfectly consistent.

Similarly, for the Fermion field one gets
\begin{equation}
   \psi_0=[{\rm dn}(\xi,i)-{\rm cn}(\xi,i)]^{e^{i\alpha}g\sqrt{\frac{2}{\lambda}}} u
\end{equation}
provided $\slashed{n}u=mu$. Here $\xi=n\cdot x+\theta$ and $\rm dn$ and $\rm cn$ are Jacobi elliptic functions. So, by the dispersion relation (\ref{eq:disp}), we must conclude that the Fermion and the scalar fields must have the same mass at the leading order. Finally, if we want that the wave function of the Fermion is single valued we must also take $g=\sqrt{\frac{\lambda}{2}}$ and $\alpha=2\pi m$ with $m$ an integer, consistently with our ordering arguments.

Finally, we consider the next-to-leading order equation for the gauge field. One can express this through the variable $\xi$ defined above. Using the dispersion relation (\ref{eq:disp}) and the above solutions, this takes the form
\begin{eqnarray}
    \frac{d^2A_\mu^{(1)}}{d\xi^2}+\frac{2e^2}{\lambda}{\rm sn}^2(\xi,i)A_\mu^{(1)}&=&
    \sqrt{\frac{2e^2}{\mu^4\lambda}}\bar\psi_0\gamma_\mu\psi_1+\sqrt{\frac{2e^2}{\mu^4\lambda}}\bar\psi_1\gamma_\mu\psi_0 \\ \nonumber
   &+&i\sqrt{\frac{2e^2}{\mu^4\lambda}}\left(\phi_1^\dagger\partial_\mu\phi_0-\partial_\mu\phi_1^\dagger\phi_0\right)
   +i\sqrt{\frac{2e^2}{\mu^4\lambda}}\left(\phi_0^\dagger\partial_\mu\phi_1-\partial_\mu\phi_0^\dagger\phi_1\right)
\end{eqnarray}
The general equation
\begin{equation}
   y''(x)+a\cdot{\rm sn}^2(b\cdot x,i)y(x)=\delta(x)
\end{equation}
and the corresponding homogeneous equation admit stable solutions only for $b^2=a/6$ so, in order to write down a physical solution we must have $e^2=3\lambda$. The Green function can be written down
\begin{equation}
    G(\xi)=-\frac{1}{4}\theta(\xi){\rm cn}(\xi+\varphi,i){\rm dn}(\xi+\varphi,i)
\end{equation}
being $\varphi$ any value such that ${\rm cn}(\varphi,i)=0$ and the homogeneous equation has the simple solution
\begin{equation}
    A_\mu^{hom}=\epsilon_\mu {\rm cn}(\xi,i){\rm dn}(\xi,i)
\end{equation}
being $\epsilon_\mu$ a constant vector. These solutions describe a free massive particle with the same mass of the scalar and Fermion fields and a proper coupling consistent with supersymmetry.

We see that, in order to have a mechanism to generate mass, the model must have all the properties of supersymmetry. This completes the proof.

\end{proof}

An interesting consequence of this theorem is that, provided the mechanism generating masses through a scalar field is the same for quantum chromodynamics and standard model, necessarily the theory must imply supersymmetry and the its breaking. Conversely, it is not needed to introduce mass terms into a supersymmetric model as masses can be dynamically generated through scalar superfields.

%\section{Quantum field theory}

%We can exploit the result given in the theorem through the generating functional
%\begin{equation}
%   Z[j_\phi,j_\phi^\dagger,j,\eta,\bar\eta]=Ne^{-\frac{i}{4}
%   \int d^4x\left[F^2-\frac{1}{2\alpha}(\partial\cdot A)^2\right]}
%   e^{i\int d^4x\left[|\partial\phi-ieA\phi|^2-\frac{\lambda}{2}|\phi|^4\right]}
%   e^{i\int d^4x\bar\psi(i\slashed{\partial}+e\slashed{A}+g\phi)\psi}
%   e^{i\int d^4xj^\mu A_\mu}e^{i\int d^4xj_\phi^\dagger\phi+\phi^\dagger j_\phi}
%   e^{i\int d^4x\left[\bar\eta\psi+\bar\psi\eta\right]}
%\end{equation}

\section{Standard Model\label{sec3}}

A few considerations are in order for the standard model. The very nature of this mechanism implies that we need a number of scalar fields in order to have all the particles of the theory massive. So, this means that, if we want to adopt this mechanism also in the standard model, all Fermion fields will have identical masses and the same happens for bosons. An immediate implication of this simple conclusion based on the theorem proved above is that the only way to get all this matter to have a physical meaning is by considering supersymmetric generalizations of the standard model. The only difference we expect is the presence of just the quartic term for the scalar fields and to assume such a self-interaction becoming increasingly large. But, as the gauge coupling will be fixed through the self-interaction of these scalar fields, it is important to compute the corresponding beta function for the coupling. We showed that the coupling of the scalar field decreases to zero lowering momenta\cite{fra4} making all this scenario consistent. This should be expected on the basis of the triviality of the theory. But, modifications of this beta function with the scalar field interacting with other fields has to be computed.

An immediate consequences of this results is that the observation of the Higgs particle may be not enough to understand the proper mechanism of mass generation. Also, to understand the form of the corresponding potential of the scalar fields is essential to tell what is the right mechanism at work.

Finally, we note that this mechanism, {\sl per se}, is able to give to all the particles of the theory an identical mass while the coupling are also properly fixed. So, mass differences can only be understood through the mechanism that eventually breaks supersymmetry. This is the way this kind of models are currently built.

\section{Conclusions\label{sec4}}

We have seen how a Higgs particle can produce masses to all the fields interacting with it, being massless but provided its self-interaction is very large. This same mechanism is at work in a Yang-Mills field to produce a massive field already at classical level. So, if one should expect that the same mechanism is at work both for strong interactions and for the masses of the particle in the standard model, it is essential to get an understanding of the form of the self-interaction of the Higgs field. But if this should be the same mechanism for both then supersymmetry must enter into play. This is our fundamental conclusion. 

Presently, we limited our analysis just to a classical level. It is possible to extend this to a quantum field theory. We aim to do this in a very near future. But we point out how, already at the classical level, the result is quite unexpected.

% If you have acknowledgments, this puts in the proper section head.
%\begin{acknowledgments}
% put your acknowledgments here.
%\end{acknowledgments}

\end{document}